


\documentclass[10pt,twocolumn,twoside]{IEEEtran}

\IEEEoverridecommandlockouts

\usepackage{graphics} 
\usepackage{epsfig} 
\usepackage{mathtools}
\usepackage{algorithm,algorithmic}

\usepackage{balance}
\usepackage{amsmath,amsfonts,amssymb,amscd,amsthm}
\usepackage{capt-of}
\usepackage{bbm}
\usepackage{color}
\usepackage{cite}
\usepackage{verbatim}


\newtheorem{remark}{\bfseries Remark}
\newtheorem{theorem}{\bfseries Theorem}
\newtheorem{lemma}{\bfseries Lemma}
\newtheorem{assump}{\bfseries Assumption}

\newtheorem{defn}{\bfseries Definition}
\newtheorem{prop}{\bfseries Proposition}

\hyphenation{op-tical net-works semi-conduc-tor}

\newenvironment{varalgorithm}[1]
  {\algorithm}
  {\endalgorithm}

\newenvironment{list4}{
	\begin{list}{$\bullet$}{%
			\setlength{\itemsep}{0.05cm}
			\setlength{\labelsep}{0.2cm}
			\setlength{\labelwidth}{0.3cm}
			\setlength{\parsep}{0in} 
			\setlength{\parskip}{0in}
			\setlength{\topsep}{0in} 
			\setlength{\partopsep}{0in}
			\setlength{\leftmargin}{0.16in}}}
	{\end{list}}

\begin{document}

\title{\LARGE \bf
Finite Time Privacy Preserving Quantized Average Consensus with Transmission Stopping}

\author{Apostolos~I.~Rikos, Christoforos~N.~Hadjicostis, and
 Karl~H.~Johansson
    \thanks{Apostolos~I.~Rikos and K.~H.~Johansson are with the Division of Decision and Control Systems, KTH Royal Institute of Technology, SE-100 44 Stockholm, Sweden. They are also affiliated with Digital Futures, SE-100 44 Stockholm, Sweden. E-mails:{\tt \{rikos,kallej\}@kth.se}.}
    \thanks{C.~N.~Hadjicostis is with the Department of Electrical and Computer Engineering, University of Cyprus, 1678 Nicosia, Cyprus: E-mail:{\tt~chadjic@ucy.ac.cy}.}
	\thanks{This work was supported by the Knut and Alice Wallenberg Foundation and the Swedish Research Council.}
}
\maketitle
\thispagestyle{empty}
\pagestyle{empty}

\begin{abstract} 
Due to their flexibility, battery powered or energy-harvesting wireless networks are employed in diverse applications. 
Securing data transmissions between wireless devises is of critical importance in order to avoid privacy-sensitive user data leakage. 
In this paper, we focus on the scenario where some nodes are curious (but not malicious) and try to identify the initial states of one (or multiple) other nodes, while some nodes aim to preserve the privacy of their initial states from the curious nodes. 
We present a privacy preserving finite transmission event-triggered quantized average consensus algorithm. 
Its operation is suitable for battery-powered or energy-harvesting wireless network since it guarantees (i) efficient (quantized) communication, and (ii) transmission ceasing (which allows preservation of available energy).
Furthermore, we present topological conditions under which the proposed algorithm allows nodes to preserve their privacy. 
We conclude with a comparison of our algorithm against other algorithms in the existing literature. 
\end{abstract}

\begin{IEEEkeywords}
\textbf{Event-triggered distributed algorithms, privacy preserving average consensus, quantized communication, finite-time convergence, finite transmission.} 
\end{IEEEkeywords}

%
%
%
%
\section{INTRODUCTION}\label{intro}

Wireless control networks (WCN) play a major role in important applications due to their deployment flexibility, 
\cite{2018:Park_Ergen_Fischione_Lu_Johansson}. 
Furthermore, the absence of cables for data communication means that nodes need to rely on (i) battery storage, and/or (ii) energy harvesting techniques for their operation. 
Prolonging the lifetime of a device is a topic that has also received a lot of attention in recent years (see \cite{2019:Knorn_Dey_Ahlen_Quevedo} and references therein).
In this paper, in order to prolong the lifetime of nodes we rely on (i) event-triggered operation, (ii) transmission stopping guarantees, and (iii) quantized processing and communication. 

The flexibility of WCN allows them to 
work unattended in hostile environments with a limited energy budget. 
Security and privacy of WCN is a challenging issue since, during their operation in a potentially hostile environment, they are exposed to a variety of privacy attacks. 
Specifically, distributed coordination algorithms require exchange of collected data between neighboring nodes. 
In many occasions there might be nodes in the network that are curious and aim to extract private and/or sensitive data. 
Revealing the state of a node may be undesirable in case the state is private or contains sensitive information.
Efficient (quantized) communication between nodes is a desirable feature since (i) it is more suitable for the available network resources and (ii) exhibits advantages and applicability to public-key cryptosystems.  
For these reasons, several strategies have been proposed for distributed coordination by achieving quantized average consensus \cite{2016:Chamie_Basar, 2020:Rikos_Mass_Accum}. 

\vspace{.2cm}

\noindent
\textbf{Previous Literature.}
There have been different approaches for dealing with the problem of calculating the quantized average of the initial states with privacy preservation guarantees. 
In \cite{2016:CortesPappas, NOZARI:2017} the authors present works on differential privacy. 
In this work, nodes inject uncorrelated noise into the exchanged messages. 
The injection of correlated noise at each time step and for a finite period of time was proposed in \cite{2013:Nikolas_Hadj}. 
In \cite{Mo-Murray:2017} the nodes asymptotically subtract the initial offset values they added in the computation. 
The problem of calculating the average of the initial states in a privacy-preserving manner is discussed in \cite{Rezazadeh_Kia:2018} for a continuous time weight balanced system.  
In \cite{Wang:2019} the average of the initial states is calculated in a privacy preserving manner via a state-decomposition-based approach, whereas \cite{2020:Ridgley_Lynch} discusses the problem under certain topological conditions. 
Homomorphic encryption \cite{ChrisCDC:2018, hadjicostis2020privacy} is another strategy which guarantees privacy preservation. 
However, it requires the existence of trusted nodes and imposes heavier computational requirements. 
In \cite{2020:Rikos_Privacy_CDC} the authors present an event-based offset algorithm. 
This strategy allows the calculation of the exact quantized average in a finite number of time steps, but requires a large number of time steps for convergence. 
Finally, in \cite{2021:Rikos_Privacy_TCNS} the authors present an initial zero-sum offset algorithm. 
This strategy leads to fast finite time convergence to the exact average, but requires multiple simultaneous transmissions which increase significantly the header of the transmitted message. 

\vspace{.2cm}

\noindent
\textbf{Main Contributions.}
In this paper, we present a novel privacy preserving event-triggered distributed algorithm which (i) achieves average consensus under privacy constraints with quantized communication, (ii) converges after a finite number of time steps, and (iii) relies on event-driven operation and ceases transmissions once convergence has been achieved (which makes it suitable for battery powered or energy harvesting wireless networks). 
The main contributions of our paper are the following.
\begin{list4}
\item We present a novel privacy preserving distributed event-triggered algorithm which operates with quantized values and calculates the exact average of the initial states under privacy constraints; see Algorithm~\ref{algorithm_max} in Section~\ref{sec:Priv_Alg}. 
\item We show that our proposed privacy preserving algorithm converges after a finite number of iterations for which we provide a polynomial upper bound. Furthermore, we show our algorithm's transmission stopping capabilities; see Theorem~\ref{PROP1_max} in Section~\ref{sec:Conv_analysis}. 
\item We present the necessary topological conditions that ensure privacy preservation for the nodes that follow the proposed algorithm; see Proposition~\ref{prop:1} in Section~\ref{sec:conditions}.
\item We demonstrate our algorithm's operation and compare its performance against other finite time privacy preserving algorithms from the current literature; see Section~\ref{results}. 
\end{list4}
The proposed privacy preserving algorithm relies on \textit{multiple state decomposition}. 
Specifically, at initialization, each node decomposes its initial state into multiple substate values. 
No substate value is equal to the initial state and each substate value is different than the other substate values. 
Furthermore, the average of the substate values is equal to the initial state. 
The node utilizes one substate as its initial state. 
Then, it injects the other substates to its state at specific instances and transmits them to a different node each time. 
This ensures that every neighboring node receives at least one substate.
Thus, the privacy of each node's initial state is preserved (at least if one neighboring node is not colluding with curious nodes).

The operation of our privacy preserving strategy relies on the quantized averaging algorithm in \cite{2021:Rikos_Finite_Trans_AUTO} (which is not privacy preserving). 
Unlike other privacy preserving strategies in the current literature (e.g., \cite{2013:Nikolas_Hadj, Mo-Murray:2017, 2019:Allerton_themis, 2021:Manitara_Rikos_Hadjicostis}), the proposed strategy takes full advantage of the finite time operation and transmission ceasing capability of \cite{2021:Rikos_Finite_Trans_AUTO}, without requiring global network parameters. 
This is mainly because the substate values are integers and the privacy preserving strategy converts the event trigger conditions in \cite{2021:Rikos_Finite_Trans_AUTO}. 
A comparison with current works is provided in Section~\ref{sec:compar_subsec}.



%
%
%
%
\section{NOTATION AND BACKGROUND}\label{preliminaries}

The sets of real, rational, integer and natural numbers are denoted by $ \mathbb{R}, \mathbb{Q}, \mathbb{Z}$ and $\mathbb{N}$, respectively. 
The set of nonnegative integers is denoted by $\mathbb{Z}_+$. 


\noindent
\textbf{Graph-Theoretic Notions.} 
Consider a network of $n$ ($n \geq 2$) agents communicating only with their immediate neighbors. 
The communication topology can be captured by a directed graph (digraph), called \textit{communication digraph}. 
A digraph is defined as $\mathcal{G}_d = (\mathcal{V}, \mathcal{E})$, where $\mathcal{V} =  \{v_1, v_2, \dots, v_n\}$ is the set of nodes and $\mathcal{E} \subseteq \mathcal{V} \times \mathcal{V} - \{ (v_j, v_j) \ | \ v_j \in \mathcal{V} \}$ is the set of edges (self-edges excluded). 
A directed edge from node $v_i$ to node $v_j$ is denoted by $m_{ji} \triangleq (v_j, v_i) \in \mathcal{E}$, and captures the fact that node $v_j$ can receive information from node $v_i$ (but not the other way around). 
We assume that the given digraph $\mathcal{G}_d = (\mathcal{V}, \mathcal{E})$ is \textit{strongly connected} (i.e., for each pair of nodes $v_j, v_i \in \mathcal{V}$, $v_j \neq v_i$, there exists a directed \textit{path}\footnote{A directed \textit{path} from $v_i$ to $v_j$ exists if we can find a sequence of vertices $v_i \equiv v_{l_0},v_{l_1}, \dots, v_{l_t} \equiv v_j$ such that $(v_{l_{\tau+1}},v_{l_{\tau}}) \in \mathcal{E}$ for $ \tau = 0, 1, \dots , t-1$.} from $v_i$ to $v_j$). 
The subset of nodes that can directly transmit information to node $v_j$ is called the set of in-neighbors of $v_j$ and is represented by $\mathcal{N}_j^- = \{ v_i \in \mathcal{V} \; | \; (v_j,v_i)\in \mathcal{E}\}$, while the subset of nodes that can directly receive information from node $v_j$ is called the set of out-neighbors of $v_j$ and is represented by $\mathcal{N}_j^+ = \{ v_l \in \mathcal{V} \; | \; (v_l,v_j)\in \mathcal{E}\}$. 
The cardinality of $\mathcal{N}_j^-$ is called the \textit{in-degree} of $v_j$ and is denoted by $\mathcal{D}_j^- = | \mathcal{N}_j^- |$, while the cardinality of $\mathcal{N}_j^+$ is called the \textit{out-degree} of $v_j$ and is denoted by $\mathcal{D}_j^+ = | \mathcal{N}_j^+ |$.


\noindent
\textbf{Node Operation.} 
With respect to quantization of information flow, we have that at time step $k \in \mathbb{Z}_+$, each node $v_j \in \mathcal{V}$ maintains (i) the state variables $y^s_j[k], z^s_j[k], q_j^s[k]$ (where $y^s_j[k] \in \mathbb{Z}$, $z^s_j[k] \in \mathbb{Z}_+$, $q_j^s[k] = \frac{y_j^s[k]}{z_j^s[k]}$), (ii) the mass variables $y_j[k], z_j[k]$, (where $y_j[k] \in \mathbb{Z}$ and $z_j[k] \in \mathbb{Z}_+$), (iii) the substate counter $s_j$ (where $s_j \in \mathbb{N}$), (iv) the privacy variables $u_j^y[s_j]$, $u_j^z[s_j]$ (where $u_j^y[s_j] \in \mathbb{Z}$, $u_j^z[s_j] \in \mathbb{Z}$), (v) the transmission variables $S\_br_j$ and $M\_tr_j$ (where $S\_br_j \in \mathbb{N}$ and $M\_tr_j \in \mathbb{N}$). 
Note here that for every node $v_j$, the state variables $y^s_j[k], z^s_j[k], q_j^s[k]$ are used to store the received messages and calculate the quantized average of the initial values, the mass variables $y_j[k], z_j[k]$ are used to communicate with other nodes by either transmitting or receiving messages, the substate counter $s_j$ is used to transmit the privacy variables, the privacy variables $u_j^y[s_j]$, $u_j^z[s_j]$ are used to preserve the privacy of the initial state, and the transmission variables $S\_br_j$, $M\_tr_j$ are used to decide whether the state variables will be broadcasted or the mass variables will be transmitted.

Furthermore, we assume that each node is aware of its out-neighbors and can directly (or indirectly\footnote{Indirect transmission could involve broadcasting a message to all out-neighbors while including in the message header the ID of the out-neighbor it is intended for.}) transmit messages to each out-neighbor; however, it cannot necessarily receive messages (at least not directly) from them. 
In the proposed distributed algorithm, each node $v_j$ assigns a \textit{unique order} in the set $\{0,1,..., \mathcal{D}_j^+ -1\}$ to each of its outgoing edges $m_{lj}$, where $v_l \in \mathcal{N}^+_j$. 
More specifically, the order of link $(v_l,v_j)$ for node $v_j$ is denoted by $P_{lj}$ (such that $\{P_{lj} \; | \; v_l \in \mathcal{N}^+_j\} = \{0,1,..., \mathcal{D}_j^+ -1\}$). 
This unique predetermined order is used during the execution of the proposed distributed algorithm as a way of allowing node $v_j$ to transmit messages to its out-neighbors in a \textit{round-robin}\footnote{When executing the proposed algorithm, each node $v_j$ transmits to its out-neighbors, one at a time, by following a predetermined order. The next time it transmits to an out-neighbor, it continues from the outgoing edge it stopped the previous time and cycles through the edges in a round-robin fashion according to the predetermined ordering.} fashion.

\section{PROBLEM FORMULATION}\label{probForm}

Consider a strongly connected digraph $\mathcal{G}_d = (\mathcal{V}, \mathcal{E})$, where each node $v_j \in \mathcal{V}$ has an initial (i.e., for $k=0$) quantized value $y_j[0]$ (for simplicity, we take $y_j[0] \in \mathbb{Z}$). 
Furthermore, consider that the node set $\mathcal{V}$ is partitioned into three subsets. 
Specifically, we have (i) the subset of nodes $v_j \in \mathcal{V}_p \subset \mathcal{V}$ that wish to preserve their privacy by not revealing their initial states $y_j[0]$ to other nodes, (ii) the subset of nodes $v_c \in \mathcal{V}_c \subset \mathcal{V}$ that are curious and try to identify the initial states $y[0]$ of all or a subset of nodes in the network, and (iii) the rest of the nodes  $v_i \in \mathcal{V}_n \subset \mathcal{V}$ that neither wish to preserve their privacy nor identify the states of any other nodes.
We assume that $\mathcal{V}_p \cap \mathcal{V}_c = \emptyset$, which means that curious nodes in $\mathcal{V}_c$ collaborate arbitrarily in order to identify the initial states of other nodes in the network. 
An example is shown in Fig.~\ref{prob_form_graph} (from \cite{2021:Rikos_Privacy_TCNS}).
\begin{figure}[h]
\begin{center}
\includegraphics[width=0.5\columnwidth]{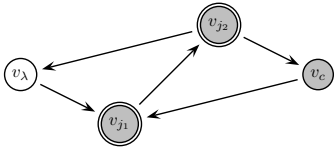}
\vspace{-0.2cm}
\caption{Example of a digraph with the different types of nodes in the network: nodes $v_{j_1}, v_{j_2} \in \mathcal{V}_p$ wish to preserve their privacy, node $v_{c} \in \mathcal{V}_c$ is curious and wishes to identify the initial states of other nodes in the network, and node $v_{\lambda} \in \mathcal{V}_{n}$ is neither curious nor wishes to preserve its privacy\vspace{-0.32cm}.}
\label{prob_form_graph}
\end{center}
\end{figure}


Privacy is defined as the ability of an individual node to seclude itself and thereby express itself selectively. 
We consider that the information of interest for each node is its initial state $y_j[0]$. 
We adopt the following notion of privacy, which aims to ensure that the state $y_j[0]$ cannot be inferred exactly by curious nodes and relates to notions of possible innocence in theoretical computer science \cite{1998:Reiter, 2006:Chatzikokolakis} in the sense that there is some uncertainty about $y_j[0]$.

\begin{defn}\label{Definition_Quant_Privacy} 
A node $v_j \in \mathcal{V}_p$ preserves the privacy of its initial state $y_j[0] \in \mathbb{Z}$ if $y_j[0]$ cannot be inferred by curious nodes $v_{c} \in \mathcal{V}_c$ at any point during the operation of the algorithm. 
This means that curious nodes in $\mathcal{V}_c$ cannot determine a finite range  $[\alpha, \beta]$ (where $\alpha < \beta$ and $\alpha, \beta \in \mathbb{R}$) in which the initial state $y_j[0]$ lies in.
\end{defn}

In this paper, we develop a distributed algorithm that allows nodes to address the problems \textbf{P1}, \textbf{P2} and \textbf{P3} presented below, while processing and transmitting \textit{quantized} information via available communication links. 

\noindent
\textbf{P1}. Every node $v_j$ obtains, after a finite number of steps, a fraction $q_j^s$ which is equal to the \textit{exact} average $q$ of the initial states of the nodes (i.e., there is no quantization error), where
\begin{equation}\label{initial_average}
q = \frac{\sum_{l=1}^{n}{y_l[0]}}{n} .
\end{equation}
Specifically, we argue that there exists $k_0$ so that for every $k \geq k_0$ we have 
\begin{equation}\label{alpha_z_y}
y^s_j[k] = \frac{\sum_{l=1}^{n}{y_l[0]}}{\alpha}  \ \ \text{and} \ \ z^s_j[k] = \frac{n}{\alpha} ,
\end{equation}
where $\alpha \in \mathbb{N}$. This means that 
\begin{equation}\label{alpha_q}
q^s_j[k] = \frac{(\sum_{l=1}^{n}{y_l[0]}) / \alpha}{n / \alpha} \coloneqq q ,
\end{equation}
for every $v_j \in \mathcal{V}$ (i.e., for $k \geq k_0$ every node $v_j$ has calculated $q$ as the ratio of two integer values). 

\noindent
\textbf{P2}. Every node $v_j \in \mathcal{V}_p$ preserves the privacy of its initial state $y_j[0]$ (i.e., it does not reveal its initial state $y_j[0]$ to other nodes) when it exchanges quantized information with neighboring nodes while calculating $q$ in \eqref{initial_average} (i.e., its state variables $y^s_j$, $z^s_j$, $q^s_j$ fulfill \eqref{alpha_z_y} and \eqref{alpha_q}, respectively). 

\noindent
\textbf{P3}. Every node $v_j$ stops performing transmissions towards its out-neighbors $v_l \in \mathcal{N}^+_j$ once its state variables $y^s_j$, $z^s_j$, $q^s_j$ fulfill \eqref{alpha_z_y} and \eqref{alpha_q}, respectively.

%
%
%
%
\section{PRIVACY PRESERVING EVENT-TRIGGERED QUANTIZED AVERAGE CONSENSUS ALGORITHM WITH FINITE TRANSMISSION CAPABILITIES}\label{MaxAlgorithm}

In this section we present a distributed algorithm which addresses problems \textbf{(P1)}, \textbf{(P2)}, \textbf{(P3)} presented in Section~\ref{probForm}. 
Before presenting the main functionalities of our algorithm, we make the following assumption.

\begin{assump}\label{assump_max_neigh_knowledge}
We assume that each node $v_j \in \mathcal{V}$ has knowledge of the maximum out-degree in the network $\mathcal{D}_{max}^+ = \max_{v_i \in \mathcal{V}} \mathcal{D}_j^+$. 
\end{assump}

Assumption~\ref{assump_max_neigh_knowledge} is important for guaranteeting convergence to the average of the initial states. 
In case Assumption~\ref{assump_max_neigh_knowledge} does not hold, then our algorithm may converge to a value that is not equal to the average of the initial states (i.e., our algorithm will simply achieve consensus).

\subsection{Initialization for Privacy Preserving Algorithm with Multiple State Decomposition}

Our strategy is based on the event-triggered deterministic algorithm in \cite{2021:Rikos_Finite_Trans_AUTO} with some modifications (since the algorithm in \cite{2021:Rikos_Finite_Trans_AUTO} is not privacy preserving). 
The main difference is the deployment of a mechanism that decomposes the initial state $y_j[0]$ of each node $v_j \in \mathcal{V}_p$ into $\mathcal{D}_{max}^+ + 2$ substates. 
The average of the $\mathcal{D}_{max}^+ + 2$ substates is equal to the initial state $y_j[0]$. 
Then, each substate is transmitted to a different out-neighbor at a different time step thus, effectively preserving the privacy of the initial state $y_j[0]$. 

In previous works (e.g., \cite{2013:Nikolas_Hadj, Mo-Murray:2017, 2019:Allerton_themis, 2020:Rikos_Privacy_CDC, 2021:Rikos_Privacy_TCNS}), each node $v_j \in \mathcal{V}_p$ injects a nonzero offset $u_j$ to its initial state. 
This means that it sets $\widetilde{y}_j[0] = y_j[0] + u_j$, where $u_j \neq 0$. 
However, in our case we require each node $v_j \in \mathcal{V}_p$ to decompose its initial state $y_j[0]$ into $\mathcal{D}_{max}^+ + 2$ substates whose average is equal to the initial state. 
Furthermore, each node $v_j$ maintains its substate counter $s_j \in \mathbb{N}$, and its privacy variables $u_j^y[s_j] \in \mathbb{Z}$, $u_j^z[s_j] \in \mathbb{Z}$. 
At initialization, each node $v_j \in \mathcal{V}_p$ chooses the privacy variables $u_j^y[s_j] \in \mathbb{Z}$, $u_j^z[s_j] \in \mathbb{Z}$, for $s_j \in \{ 0, 1, 2, ..., \mathcal{D}_{max}^+ + 1 \}$, to satisfy the
following constraints: 
\begin{subequations}\label{Offset_values}
\begin{align}
u_j^y[s_j] &\neq u_j^y[s_j'], \ u_j^y[s_j] \neq y_j[0], \nonumber \\ 
\forall s_j, s_j' & \in [0, \mathcal{D}_{max}^+ + 1], \ \text{with} \ s_j \neq s_j' \label{Offset_value_1bb} \\
u_j^y[s_j] &= 0, \ \forall \ s_j > \mathcal{D}_{max}^+ + 1, \label{Offset_value_1bbb} \\
u_j^z[s_j] &= 1, \ \forall \ s_j \in [0, \mathcal{D}_{max}^+ + 1], \label{Offset_value_1d} \\
u_j^z[s_j] &= 0, \ \forall \ s_j > \mathcal{D}_{max}^+ + 1, \label{Offset_value_1dd} \\
& \frac{\sum_{s_j = 0}^{\mathcal{D}_{max}^+ + 1} u_j^y[s_j]}{\mathcal{D}_{max}^+ + 2}= y_j[0].  \label{Offset_value_1e} 
\end{align}
\end{subequations}
Constraints \eqref{Offset_value_1bb}--\eqref{Offset_value_1e} are explicitly analyzed below: 
\\ \noindent
\textbf{1.} In \eqref{Offset_value_1bb} each substate $u_j^y[s_j]$ needs to have different value than every other substate $u_j^y[s_j']$ for $s_j \neq s_j'$ and $s_j, s_j' \in [0, \mathcal{D}_{max}^+ + 1]$. 
Furthermore, each substate $u_j^y[s_j]$ needs to have a different value than the initial state $y_j[0]$. 
This is important for not revealing the value of the initial state $y_j[0]$ and thus preserving its privacy. 
\\ \noindent
\textbf{2.} In \eqref{Offset_value_1bbb} each node $v_j$ stops injecting nonzero offsets after $\mathcal{D}_{max}^+ + 2$ time steps in order not to intervene with the calculation of the quantized average. 
This allows each node to calculate the exact quantized average of the initial states without any error.
\\ \noindent
\textbf{3.} In \eqref{Offset_value_1d} the substate $u_j^z[s_j]$ which is injected to the network by node $v_j$ needs to be equal to $1$ so that (i) the event-triggered conditions of the presented algorithm hold and (ii) the operation of the algorithm leads to the calculation of the exact average. 
\\ \noindent
\textbf{4.} In \eqref{Offset_value_1dd} each node $v_j$ stops injecting nonzero offsets after $\mathcal{D}_{max}^+ + 1$ time steps which allows the calculation of the quantized average without any error. 
\\ \noindent
\textbf{5.} In \eqref{Offset_value_1e} the average of the total injected offset in the network by node $v_j$ needs to be equal to node $v_j$'s initial state $y_j[0]$. 
This means that each node $v_j$ creates $\mathcal{D}_{max}^+ + 2$ substates of its initial state which have different $u_j^y[s_j]$ values.  
These substates allow calculation of the exact quantized average of the initial states without any error. 

The above choices imply that each node $v_j \in \mathcal{V}_p$ generates $\mathcal{D}_{max}^+ + 2$ substates $u_j^y[s_j]$, $u_j^z[s_j]$ of its initial state $y_j[0]$, for which it holds (i) $u_j^y[s_j] \neq u_j^y[s_j']$, $s_j \neq s_j'$ and (ii) $u_j^y[s_j] \neq y_j[0]$, for every $s_j, s_j' \in [0, \mathcal{D}_{max}^+ + 1]$. 
Then, during every time step $s_j \in [0, \mathcal{D}_{max}^+ + 1]$, node $v_j$ injects in the network the substates $u_j^y[s_j]$, $u_j^z[s_j]$. 
This leads to the calculation of the exact quantized average in a privacy preserving manner. 
Note that if $u_j^y[s_j] \neq y_j[0]$, for every $s_j \in [0, \mathcal{D}_{max}^+ + 1]$, does not hold, then the exact quantized average may be calculated but not in a privacy preserving manner. 

\begin{remark}
Note here that each node $v_j \in \mathcal{V}_p$ that wishes to preserve its privacy chooses its privacy variables $u_j^y[s_j] \in \mathbb{Z}$, $u_j^z[s_j] \in \mathbb{Z}$ according to \eqref{Offset_value_1bb}--\eqref{Offset_value_1e}. 
However, each node $v_i \notin \mathcal{V}_p$ which does not wish to preserve its privacy sets $u_i^y[s_j] = y_i[0]$, for every $s_i \in [0, \mathcal{D}_{max}^+ + 1]$ and also chooses its privacy variables $u_i^y[s_j] \in \mathbb{Z}$, $u_i^z[s_j] \in \mathbb{Z}$, according to \eqref{Offset_value_1bbb}--\eqref{Offset_value_1e} (i.e., \eqref{Offset_value_1bb} does not hold).

\end{remark}

\subsection{Privacy Preserving Finite Transmission Event-Triggered Algorithm}\label{sec:Priv_Alg}

The details of the privacy preserving distributed algorithm with transmission stopping can be seen in Algorithm~\ref{algorithm_max}. 

\noindent
\begin{varalgorithm}{1}
\caption{Privacy Preserving Finite Transmission Event-Triggered Quantized Average Consensus}
\textbf{Input:} A strongly connected digraph $\mathcal{G}_d = (\mathcal{V}, \mathcal{E})$ with $n=|\mathcal{V}|$ nodes and $m=|\mathcal{E}|$ edges. Each node $v_j\in \mathcal{V}$ has an initial state $y_j[0] \in \mathbb{Z}$ and has knowledge of $\mathcal{D}_{max}^+$. 
\\
\textbf{Initialization:} Each node $v_j \in \mathcal{V}$ does the following: 
\begin{list4}
\item[1)] Assigns to each outgoing edge $v_l \in \mathcal{N}^+_j$ a unique order $P_{lj}$ in the set $\{0,1,..., \mathcal{D}_j^+ -1\}$. 
\item[2)] Sets counter $s_j = 0$. Chooses $u^y_j[k] \neq u^y_j[k_1]$, $u^y_j[k] \neq y_j[0]$ where $k \neq k_1$ for every $k, k_1 \in \{0, 1, \ldots, \mathcal{D}_{max}^+ + 1\}$, and $u^y_j[k'] = 0$ for $k' > \mathcal{D}_{max}^+ + 1$ such that $(\sum_{s_j = 0}^{\mathcal{D}_{max}^+ + 1} u_j^y[s_j]) / (\mathcal{D}_{max}^+ + 2) = y_j[0]$. 
\item[3)] Chooses $u^z_j[k] = 1$ for $k \in \{0, 1, \ldots, \mathcal{D}_{max}^+ + 1\}$, and $u^z_j[k'] = 0$ for $k' > \mathcal{D}_{max}^+ + 1$. 
\item[4)] Sets $y_j[0] = u^y_j[s_j]$, $z_j[0] = u^z_j[s_j]$, $z^s_j[0] = z_j[0]$, $y^s_j[0] = y_j[0]$, $q^s_j[0] = y^s_j[0] / z^s_j[0]$, $s_j = s_j + 1$ and $S\_br_j = 0$, $M\_tr_j = 0$. 
\item[5)] Broadcasts $z^s_j[0]$, $y^s_j[0]$ to every $v_l \in \mathcal{N}_j^+$.
\end{list4}
\textbf{Iteration:} For $k=0,1,2,\dots$, each node $v_j \in \mathcal{V}$: 
\begin{list4}
\item[1)] Receives $y^s_i[k]$, $z^s_i[k]$ from every $v_i \in \mathcal{N}_j^-$ (if no message is received it sets $y^s_i[k] = 0$, $z^s_i[k] = 0$). 
\item[2)] Receives $y_i[k]$, $z_i[k]$ from each $v_i \in \mathcal{N}_j^-$ and sets 
$$
y_j[k+1] = y_j[k] + \sum_{v_i \in \mathcal{N}_j^-} w_{ji}[k]y_i[k] ,
$$ 
$$
z_j[k+1] = z_j[k] + \sum_{v_i \in \mathcal{N}_j^-} w_{ji}[k]z_i[k] ,
$$
where $w_{ji}[k]=1$ if a message with $y_i[k]$, $z_i[k]$ is received from in-neighbor $v_i$, otherwise $w_{ji}[k]=0$. 
\item[3)] \textbf{If} $w_{ji}[k] \neq 0$ or $z^s_i[k] \neq 0$ for some $v_i \in \mathcal{N}_j^-$ \textbf{then} calls Algorithm~\ref{algorithm_max_1a}
\item[4)] Sets $M\_tr_j = \max\{M\_tr_j, u^z_j[s_j]\}$. 
\item[5)] \textbf{If} $M\_tr_j = 1$ \textbf{then} (i) sets $y_j[k] = y_j[k] + u^y_j[s_j]$, $z_j[k] = z_j[k] + u^z_j[s_j]$ and (ii) chooses $v_l \in \mathcal{N}_j^+$ according to $P_{lj}$ (in a round-robin fashion) and transmits $y_j[k]$, $z_j[k]$. 
Then, sets $y_j[k] = 0$, $z_j[k] = 0$, $M\_tr_j = 0$, $s_j = s_j + 1$. 
\item[6)] \textbf{If} $S\_br_j = 1$ \textbf{then} broadcasts $z^s_j[k+1]$, $y^s_j[k+1]$ to every $v_l \in \mathcal{N}_j^+$. 
Then, sets $S\_br_j = 0$. 
\item[7)] Repeats (increases $k$ to $k + 1$ and goes to Step~$1$).
\end{list4}
\textbf{Output:} \eqref{alpha_q} holds for every $v_j \in \mathcal{V}$. 
\label{algorithm_max}
\end{varalgorithm}

\noindent
\vspace{-0.5cm}    
\begin{varalgorithm}{1.A}
\caption{Event-Triggered Conditions for Algorithm~\ref{algorithm_max} (for each node $v_j$)}
\textbf{Input} 
\\ $y^s_j[k]$, $z^s_j[k]$, $q^s_j[k]$, $y_j[k+1]$, $z_j[k+1]$, $S\_br_j$, $M\_tr_j$ and the received $y^s_i[k]$, $z^s_i[k]$ from every $v_i \in \mathcal{N}_j^-$.
\\
\textbf{Execution} 
\begin{list4}
\item[1)] \underline{Event Trigger Conditions~$1$:} \textbf{If} 
\\ Condition~$(i)$: $z^s_i[k] > z^s_j[k]$, or
\\ Condition~$(ii)$: $z^s_i[k] = z^s_j[k]$ and $y^s_i[k] > y^s_j[k]$, 
\\ \textbf{then} sets 
$$ 
z^s_j[k+1] = \max_{v_i \in \mathcal{N}_j^-} z^s_i[k] , \ \ \text{and}
$$ 
$$ 
y^s_j[k+1] = \max_{v_i \in \{v_{i'} \in \mathcal{N}_j^- | z^s_{i'}[k] = z^s_j[k+1]\}} y^s_i[k] ,
$$ 
and sets $q^s_j[k+1] = \frac{y^s_j[k+1]}{z^s_j[k+1]}$, and $S\_br_j = 1$. 
\item[2)] \underline{Event Trigger Conditions~$2$:} \textbf{If}
\\ Condition~$(i)$: $z_j[k+1] > z^s_j[k+1]$, or 
\\ Condition~$(ii)$: $z_j[k+1] = z^s_j[k+1]$ and $y_j[k+1] > y^s_j[k+1]$, 
\\ \textbf{then} sets $z^s_j[k+1] = z_j[k+1]$, $y^s_j[k+1] = y_j[k+1]$ and sets $q^s_j[k+1] = \frac{y^s_j[k+1]}{z^s_j[k+1]}$ and $S\_br_j = 1$.
\item[3)] \underline{Event Trigger Conditions~$3$:} \textbf{If}
\\ Condition~$(i)$: $0 < z_j[k+1] < z^s_j[k+1]$ or 
\\ Condition~$(ii)$: $z_j[k+1] = z^s_j[k+1]$ and $y_j[k+1] < y^s_j[k+1]$, 
\\ \textbf{then} sets $M\_tr_j = 1$.
\end{list4}
\textbf{Output} 
\\ $y^s_j[k]$, $z^s_j[k]$, $q^s_j[k]$, $S\_br_j$, $M\_tr_j$.
\label{algorithm_max_1a}
\end{varalgorithm}

The intuition behind Algorithm~\ref{algorithm_max} is the following. 
Let us first consider the notion of ``leading mass''.
During time step $k$, the set of mass variables which has the largest $z[k]$ value is the ``leading mass''. 
In case there are multiple sets of mass variables that have the largest $z[k]$, then the ``leading mass'' is the set of mass variables that has the largest $y[k]$ value among the sets of mass variables with the largest $z[k]$. 
Note that a formal definition of the ``leading mass'' is presented in Section~\ref{sec:Conv_analysis}. 
Each node $v_j \in \mathcal{V}_p$ that would like to preserve its privacy performs the following steps: \\
\noindent
\textbf{Initialization.} \\ \noindent
\textbf{A.} Node $v_j$ assigns to each outgoing edge a unique order $P_{lj}$ in order to perform transmissions in a round-robin fashion.  
\\ \noindent
\textbf{B.} Node $v_j$ initializes the substate counter $s_j$ to zero (i.e., $s_j=0$) and the set of $(\mathcal{D}_{max}^+ + 2)$ privacy variables $u_j^y[s_j]$, $u_j^z[s_j]$ according to \eqref{Offset_value_1bb}--\eqref{Offset_value_1e} for $s_j \in [0, \mathcal{D}_{max}^+ + 1]$. 
For example, suppose that node $v_j$ has initial state $y_j[0] = 4$ and the maximum out-degree in the network is equal to $3$. This means that it decomposes its initial state in $5$ substates. Specifically, it can (randomly) set $u_j^y[0] = 1$, $u_j^y[1] = 8$, $u_j^y[2] = 6$, $u_j^y[3] = 2$, $u_j^y[4] = 3$, and $u_j^z[0] = 1$, $u_j^z[1] = 1$, $u_j^z[2] = 1$, $u_j^z[3] = 1$, $u_j^z[4] = 1$. Note that the average of the substates $u_j^y[s_j]$, $s_j \in \{0,1,2,3,4\}$ is equal to $4$. 
\\ \noindent
\textbf{C.} Node $v_j$ utilizes the substates $u_j^y[0]$, $u_j^z[0]$ as its initial state (i.e., it sets $y_j[0] = u_j^y[0]$ and $z_j[0] = u_j^z[0]$). Then, considers its set of stored mass variables $y_j[0]$, $z_j[0]$ to be the ``leading mass''. For this reason, it sets its state variables $z^s_j[0]$, $y^s_j[0]$, $q^s_j[0]$ to be equal to the stored mass variables $y_j[0]$, $z_j[0]$, and then broadcasts the values of its state variables. 

\noindent
\textbf{Iteration.}
\\ \noindent
\textbf{A.} Node $v_j$ receives the (possibly) transmitted state variables from its in-neighbors and, (ii) receives and stores the (possibly) transmitted mass variables from its in-neighbors. 
\\ \noindent
\textbf{B.} If node $v_j$ received a set of state variables and/or a set of mass variables from its in-neighbors, then it executes Algorithm~\ref{algorithm_max_1a}. During Algorithm~\ref{algorithm_max_1a} each node checks: 
\\ 
\textbf{B -- Event Trigger Conditions~$1$:} It checks whether the received set of state variables is equal to the ``leading mass''. If it receives messages from multiple in-neighbors it checks which set of state variables is the ``leading mass''. If Event Trigger Conditions~$1$ hold, it sets its state variables to be equal to the received set of state variables which is the ``leading mass'' and decides to broadcast its updated state variables (i.e., sets its transmission variable $S\_br_j = 1$). 
\\
\textbf{B -- Event Trigger Conditions~$2$:} It checks whether the set of mass variables it stored is the ``leading mass.'' If this condition holds, it sets its state variables to be equal to the stored set of mass variables and decides to broadcast its updated state variables (i.e., sets its transmission variable $S\_br_j = 1$). 
\\ 
\textbf{B -- Event Trigger Conditions~$3$:} It checks whether the set of mass variables it stored is not the ``leading mass'' (i.e., it checks whether its state variables are equal to the ``leading mass''). If this condition holds, this means that the mass variables of another node in the network is the ``leading mass'' (and the state variables of node $v_j$ became equal to the ``leading mass'' from Event Trigger Conditions~$1$). 
This means the stored mass variables is not the ``leading mass'' and thus $v_j$ decides to transmit its stored mass variables (i.e., sets its transmission variable $M\_tr_j = 1$). 
\\ \noindent
\textbf{C.} Node $v_j$ sets its transmission variable $M\_tr_j$ to be equal to the maximum value of $M\_tr_j$ and the substate $u_j^z[s_j]$. This step is important for the privacy preserving mechanism. Note that the value of the substate $u_j^z[s_j]$ is equal to $1$ for $s_j \in [0, \mathcal{D}_{max}^+ + 1]$. This means that the value of the transmission variable $M\_tr_j$ will become equal to $1$ for the first $\mathcal{D}_{max}^+ + 1$ time steps (since $s_j$ becomes equal to $1$ during the Initialization procedure). Thus, node $v_j$ will perform transmissions of the substates $u_j^y[s_j]$, $u_j^z[s_j]$ towards its out-neighbors. 
\\ \noindent
\textbf{D.} If $M\_tr_j$ is equal to $1$, node $v_j$ (i) injects the substates $u_j^y[s_j]$, $u_j^z[s_j]$ to its mass variables, (ii) transmits its mass variables towards an out-neighbor according to the unique order $P_{lj}$, and (iii) increases the substate counter $s_j$. 
\\ \noindent
\textbf{E.} If $S\_br_j$ is equal to $1$ node $v_j$ broadcasts its state variables towards every out-neighbor. Then, it repeats the procedure. 



\subsection{Comparison with Previous Works}\label{sec:compar_subsec}

Algorithm~\ref{algorithm_max} is significantly different than most asymptotic or finite time algorithms in the current literature (see, e.g., \cite{2013:Nikolas_Hadj, Mo-Murray:2017, 2019:Allerton_themis, 2020:Rikos_Privacy_CDC, 2021:Rikos_Privacy_TCNS}). Specifically, 
Algorithm~\ref{algorithm_max} is different from asymptotic algorithms (see, e.g., \cite{2013:Nikolas_Hadj,Mo-Murray:2017, 2019:Allerton_themis}). 
These algorithms are able to calculate the average of the initial states in an asymptotic fashion which includes transmission of real valued messages and results in asymptotic convergence. 
Furthermore, these algorithms do not provide finite transmission guarantees. 
Algorithm~\ref{algorithm_max} is also different from the finite time algorithms in \cite{2020:Rikos_Privacy_CDC, 2021:Rikos_Privacy_TCNS}. 
In the event-based offset algorithm in \cite{2020:Rikos_Privacy_CDC}, each node $v_j$ initially injects a negative valued quantized offset to its state. 
Then, it injects a positive valued offset to its state only when its event-triggered conditions hold. 
This strategy relies on transmission of quantized values towards a single out-neighbor (which does not increase the header of the transmitted message). 
However, it requires a significant number of time steps for convergence (since the event-triggered conditions of each node need to hold for a specific number of instances before the total offset is injected in the network).  Also, \cite{2020:Rikos_Privacy_CDC} does not exhibit finite transmission guarantees and is not suitable for finite transmission operation. 
Specifically, if a distributed stopping protocol is utilized, \cite{2020:Rikos_Privacy_CDC} may converge and stop before each node injects the total stored offset in the network (since each node injects a positive valued offset to its state only when its event-triggered conditions hold). 
This will result in achieving consensus but not equal to the average of the initial states. 
In the initial zero-sum offset algorithm in \cite{2021:Rikos_Privacy_TCNS}, each node injects a quantized offset to its out-neighboring nodes during the algorithm's initialization steps which leads to fast convergence. 
However, it requires multiple simultaneous transmissions of different quantized values towards every out-neighbor. 
This increases significantly the header of the transmitted message and each node keeps performing transmissions once convergence has been achieved. 
Furthermore, \cite{2021:Rikos_Privacy_TCNS} does not exhibit finite transmission guarantees. 
The privacy preserving strategy of Algorithm~\ref{algorithm_max} takes full advantage of the finite time nature of the underlying quantized averaging algorithm (which is presented in \cite{2021:Rikos_Finite_Trans_AUTO}). 
Specifically, each node decomposes its state into multiple substates and utilizes one substate as its initial state. 
Then, it transmits each one of the other substates to a single out-neighbor at specific time steps. 
This means that (i) consensus to the exact average of the initial states is reached after a finite number of iterations, (ii) no initial error is introduced which allows faster convergence of the algorithm (i.e., no offset is initially infused in the network by the nodes following the algorithm), (iii) each node performs directed transmissions towards a single out-neighbor or broadcast transmissions towards every out-neighbor (which does not increase significantly the header of the transmitted message), and (iv) every node ceases transmissions once the average of the initial states is calculated in a privacy preserving manner. 

\subsection{Convergence Analysis of Algorithm~\ref{algorithm_max}}\label{sec:Conv_analysis}

Before analyzing the deterministic convergence of Algorithm~\ref{algorithm_max}, we consider the following setup. 

{\it Setup:} Consider a strongly connected digraph $\mathcal{G}_d = (\mathcal{V}, \mathcal{E})$ with $n=|\mathcal{V}|$ nodes and $m=|\mathcal{E}|$ edges. 
During the execution of Algorithm~\ref{algorithm_max}, at time step $k_0$, there is at least one node $v_{j'} \in \mathcal{V}$, for which 
\begin{equation}\label{great_z_prop1_det}
z_{j'}[k_0] \geq z_i[k_0], \ \forall v_i \in \mathcal{V}.
\end{equation}
Then, among the nodes $v_{j'}$ for which (\ref{great_z_prop1_det}) holds, there is at least one node $v_j$ for which 
\begin{equation}\label{great_z_prop2_det}
y_j[k_0] \geq y_{l}[k_0] , \ \text{where} \ \ v_j, v_{l} \in \{ v_{j'} \in \mathcal{V} \ | \ (\ref{great_z_prop1_det}) \ \text{holds} \}.
\end{equation}
For notational convenience we will call the mass variables of node $v_j$ for which (\ref{great_z_prop1_det}) and (\ref{great_z_prop2_det}) hold as the ``leading mass'' (or ``leading masses'').

We now consider the following two lemmas which are necessary for our subsequent development. 
Due to space considerations we omit the proofs of the two lemmas; they will be available in an extended version of our paper. 

\begin{lemma}\label{before_second_lemma}
If, during time step $k_0$ of Algorithm~\ref{algorithm_max}, the mass variables of node $v_j$ fulfill (\ref{great_z_prop1_det}) and (\ref{great_z_prop2_det}), then the state variables of every node $v_i \in \mathcal{V}$ satisfy 
\begin{eqnarray}\label{first_z}
z_i^s[k_0] \leq z_j[k_0] ,
\end{eqnarray}
or 
\begin{eqnarray}\label{first_zy}
z_i^s[k_0] = z_j[k_0] \ \ \text{and} \ \ y_i^s[k_0] \leq y_j[k_0].
\end{eqnarray}
\end{lemma}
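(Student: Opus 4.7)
\medskip
\noindent
\textbf{Proof plan.} My approach is induction on $k_0$, paired with a companion invariant that the lex-ordered pair $(z_j[k], y_j[k])$ defining the ``leading mass'' (i.e., $v_j$ satisfying (\ref{great_z_prop1_det})--(\ref{great_z_prop2_det})) is non-decreasing in $k$. The base case $k_0 = 0$ is immediate from Step~4 of Algorithm~\ref{algorithm_max}, which sets $y_i^s[0] = y_i[0]$ and $z_i^s[0] = z_i[0]$ for every $v_i \in \mathcal{V}$; the bound (\ref{first_z})--(\ref{first_zy}) then reduces to (\ref{great_z_prop1_det})--(\ref{great_z_prop2_det}) itself.

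For the inductive step, I would first argue the monotonicity of the leading mass from $k$ to $k+1$. Every new mass $(z_i[k+1], y_i[k+1])$ is obtained from $(z_i[k], y_i[k])$ via (i) addition of mass values received from in-neighbors in Step~2, which is lex-monotone, and (ii) the privacy injection in Step~5, which adds $u_i^z[s_i] \in \{0,1\}$ to $z_i$ and hence is also monotone on the dominant coordinate. If the node $v_j$ currently holding the leading mass transmits in Step~5, the entire contribution $(z_j[k], y_j[k])$ (plus any injected unit in the $z$-slot) is absorbed by a single out-neighbor $v_l$, so the lex-maximum across the network can only grow between consecutive time steps.

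Using this monotonicity, I would verify (\ref{first_z})--(\ref{first_zy}) at time $k+1$ by examining the three possible outcomes of Algorithm~\ref{algorithm_max_1a}: (a) neither Event Trigger Condition~$1$ nor Condition~$2$ fires, so the state is unchanged and the bound follows from the inductive hypothesis composed with the monotonicity just established; (b) Condition~$1$ fires and $(z_i^s[k+1], y_i^s[k+1])$ is set to the lex-maximum of the state variables received from in-neighbors, each of which is bounded by the leading mass at time $k$ by the inductive hypothesis; (c) Condition~$2$ fires and $(z_i^s[k+1], y_i^s[k+1]) = (z_i[k+1], y_i[k+1])$, which is bounded by the leading mass at $k+1$ by its very definition. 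The main obstacle I expect is the intra-iteration bookkeeping, in particular aligning the order of the mass update in Step~2, the privacy injection and transmission in Step~5, and the state broadcast in Step~6 so that every lex comparison involves quantities indexed by the same time step. This appears to be a matter of careful case analysis rather than a substantially new idea.
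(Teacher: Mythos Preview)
The paper explicitly omits the proof of this lemma (``Due to space considerations we omit the proofs of the two lemmas; they will be available in an extended version of our paper''), so there is no reference argument to compare against. On its own merits, your inductive scheme coupled with the companion invariant that the lex-ordered leading mass $(z_j[k],y_j[k])$ is nondecreasing in $k$ is the natural approach and is essentially correct. The key observations---that state variables are only ever copied from (i) a neighbor's state variables or (ii) the node's own mass variables, and that in either case the inductive hypothesis together with the definition of leading mass yields the bound---are exactly what is needed.

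Two small points to tighten in the write-up. First, when you argue that the privacy injection in Step~5 is lex-monotone, make explicit that whenever $u_i^z[s_i]=0$ one also has $u_i^y[s_i]=0$ by \eqref{Offset_value_1bbb}--\eqref{Offset_value_1dd}, so the injection either strictly increases the dominant coordinate $z$ or leaves the pair unchanged; this rules out the problematic case of $z$ fixed while $y$ drops. Second, for the monotonicity of the network-wide leading mass across a transmission, note that every mass packet sent in Step~5 carries $z>0$ (either the node already had $z_j[k{+}1]>0$ from Event Trigger Conditions~3, or $u_j^z[s_j]=1$ was just injected in Step~4--5), so the receiving node's $z$ strictly increases; hence the lex-maximum over $\mathcal{V}$ cannot decrease even though the sender's mass is zeroed. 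As you anticipated, the remaining work is purely the intra-iteration index alignment between Steps~2, 5, and~6, and this is bookkeeping rather than a new idea.
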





\begin{lemma}\label{second_lemma}
If, during time step $k_0$ of Algorithm~\ref{algorithm_max}, the mass variables of {\em each} node $v_j$ with nonzero mass variables fulfill (\ref{great_z_prop1_det}) and (\ref{great_z_prop2_det}), then we have only ``leading masses'' and no ``follower masses''. 
This means that the ``Event Trigger Conditions~$2$'' will never hold again for future time steps $k \geq k_0$. 
As a result, the transmissions that (may) take place will only be via broadcasting (from ``Event Trigger Conditions~$1$ and $3$'') for at most $n-1$ time steps and then they will cease. 
\end{lemma}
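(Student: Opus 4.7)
My plan is to exploit the rigid structure forced by the hypothesis at $k_0$ and then track how fast residual information ripples through a now-quiet algorithm. First, I would observe that the hypothesis together with Lemma~\ref{before_second_lemma} tightly pins the configuration at $k_0$: every nonzero mass equals a common pair $(Y, Z)$, and every state variable is lexicographically dominated by $(Y, Z)$. For each node $v_j$ currently holding a leading mass, its state variable must in fact equal $(Y, Z)$, because at the iteration when the mass most recently reached $(Y, Z)$, Event Trigger Conditions~$2$ fired and equated the state variable to the mass variable, and state variables are lex-non-decreasing while bounded above by $(Y, Z)$ via Lemma~\ref{before_second_lemma}.

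Next, I would show that no mass-carrying transmission occurs at or after $k_0$, assuming $k_0$ is past the offset-injection phase so that $u_j^z[s_j] = 0$ and Step~$4$ of the iteration no longer forces $M\_tr_j = 1$. A leading-mass node has $z_j[k] = z_j^s[k]$ and $y_j[k] = y_j^s[k]$, which falsifies both disjuncts of Event Trigger Conditions~$3$; a zero-mass node has $z_j[k] = 0 < z_j^s[k]$ (state variables are initialized with $z_j^s[0] = 1$ and are lex-non-decreasing), which also falsifies both disjuncts. Hence $M\_tr_j = 0$ everywhere, the mass distribution is frozen at its $k_0$ values, and the strict lexicographic increase of mass over state required by Event Trigger Conditions~$2$ can never be realized for any $k \geq k_0$. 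This yields the first half of the claim.

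Finally, I would bound the residual broadcast horizon. The only trigger that can still activate is Event Trigger Conditions~$1$, and only at nodes whose state variables still lag behind $(Y, Z)$. Once any node sitting at $(Y, Z)$ broadcasts, each lagging out-neighbor upgrades to $(Y, Z)$ via Conditions~$1$ and in turn re-broadcasts in Step~$6$, so a breadth-first wavefront propagates outward from the set of leading-mass nodes. Because $\mathcal{G}_d$ is strongly connected on $n$ vertices (diameter at most $n-1$), the wavefront reaches every node within $n-1$ iterations, and thereafter no in-neighbor can deliver a strictly greater state variable, so all triggers remain permanently dormant and transmissions cease. The main obstacle I expect is the first step: certifying $(y_j^s[k_0], z_j^s[k_0]) = (Y, Z)$---not merely that it is lex-dominated by $(Y, Z)$---for each leading-mass node requires a short induction on the algorithm's history invoking Conditions~$2$ at the instant each leading mass was created; once this anchor is secured, the mass-freezing, Conditions~$2$ exclusion, and wavefront arguments fall out by direct case analysis.
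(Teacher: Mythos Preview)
The paper omits the proof of this lemma due to space constraints (deferring it to an extended version), so a direct comparison is not possible. Your argument is sound and follows the natural line; a few remarks.

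Your anchor step---that every node holding a leading mass already has state variables equal to $(Y,Z)$ at $k_0$---is indeed the crux, and your induction-on-history idea works. The reason the state cannot overshoot $(Y,Z)$ at any earlier step $k'\le k_0$ is that, past the injection phase, masses only merge, so every nonzero mass at step $k'$ has $z$-component at most $Z$ (and equals $(Y,Z)$ whenever that bound is attained); Lemma~\ref{before_second_lemma} then caps all state variables by $(Y,Z)$ throughout, and Conditions~$2$ pushes the leading-mass node's state up to $(Y,Z)$ at the step its mass first attains that value.

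Two caveats. First, your explicit assumption that $k_0$ lies past the offset-injection window is necessary (Step~4 of the iteration would otherwise force $M\_tr_j=1$ irrespective of Conditions~$3$); the paper invokes the lemma only inside the proof of Theorem~\ref{PROP1_max} after that window, so the assumption is harmless but should be stated. Second, in the wavefront argument, leading-mass nodes do \emph{not} re-broadcast at $k_0$: their $S\_br_j$ was set at the earlier instant when Conditions~$2$ fired, not at $k_0$. The propagation therefore began no later than $k_0$, and strong connectivity still yields the $n-1$ bound from $k_0$ because the set $\{v_i : (y_i^s,z_i^s)=(Y,Z)\}$ grows by at least one out-neighborhood per step once seeded. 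Your sketch glosses over who initiates the wave at $k_0$, but the bound survives.

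Finally, note that the lemma's parenthetical ``(from Event Trigger Conditions~$1$ and~$3$)'' is slightly loose: your argument actually shows Conditions~$3$ is also excluded (mass equals state for leaders, $z_j=0$ for non-leaders), so the only residual activity is broadcasting via Conditions~$1$---stronger than the statement as written.
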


We now analyze the deterministic convergence of Algorithm~\ref{algorithm_max}. 
Due to space considerations we provide a sketch of the proof. 

\begin{theorem}\label{PROP1_max}
Consider a strongly connected digraph $\mathcal{G}_d = (\mathcal{V}, \mathcal{E})$ with $n=|\mathcal{V}|$ nodes and $m=|\mathcal{E}|$ edges. 
The execution of Algorithm~\ref{algorithm_max} allows each node $v_j \in \mathcal{V}$ to reach quantized average consensus after a finite number of time steps $k_0$ upper bounded by $1 + \mathcal{D}_{max}^+ + n^2 + (n-1)m^2$, where $n$ is the number of nodes and $m$ is the number of edges in the network, and $\mathcal{D}_{max}^+$ is the maximum out-degree in the network. 
Furthermore, each node stops transmitting towards its out-neighbors once quantized average consensus is reached. 
\end{theorem}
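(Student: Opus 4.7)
The plan is to split the time axis into an \emph{injection phase} and a \emph{consensus phase}, certify that a natural mass-conservation identity holds once the injections terminate, and then reduce the consensus phase to the setting already analysed in \cite{2021:Rikos_Finite_Trans_AUTO} by invoking Lemmas~\ref{before_second_lemma} and~\ref{second_lemma}.

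First, I would bound the length of the injection phase. By Iteration step~4 combined with \eqref{Offset_value_1d}, $u^z_j[s_j]=1$ forces $M\_tr_j=1$ so long as $s_j\in\{1,\ldots,\mathcal{D}_{max}^+ + 1\}$, and the counter $s_j$ is then incremented in step~5. Hence every node injects its remaining $\mathcal{D}_{max}^+ + 1$ substates in the first $\mathcal{D}_{max}^+ + 1$ iterations after initialization, and thereafter $M\_tr_j$ is controlled solely by the event triggers. At that instant $k_1 \le 1+\mathcal{D}_{max}^+$, the total mass in the network equals $\sum_{v_j} \sum_{s=0}^{\mathcal{D}_{max}^+ + 1} u_j^y[s] = (\mathcal{D}_{max}^+ + 2)\sum_{v_j} y_j[0]$ by \eqref{Offset_value_1e}, and the total $z$-mass equals $n(\mathcal{D}_{max}^+ + 2)$; thus any ratio of surviving mass/state pairs which later agrees globally equals $q$ with $\alpha=\mathcal{D}_{max}^+ + 2$, as required by \eqref{alpha_z_y}--\eqref{alpha_q}. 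From $k_1$ onward, mass is strictly conserved: Event Trigger Conditions~1 and 2 only overwrite state variables, and Event Trigger Conditions~3 forwards the entire $(y_j[k],z_j[k])$ pair to a single out-neighbour. The round-robin order $P_{lj}$, together with the fact that $\mathcal{D}_{max}^+ + 1 \ge \mathcal{D}_j^+$, also ensures that every out-neighbour of every privacy-seeking node received at least one substate during the injection phase, which will be needed in the privacy argument but not here.

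Next I would invoke the setup preceding Lemma~\ref{before_second_lemma}: at any $k\ge k_1$ a ``leading mass'' exists, and by Lemma~\ref{before_second_lemma} no state variable in the network strictly dominates it. Following the potential argument underlying \cite{2021:Rikos_Finite_Trans_AUTO}, I would track the number of nonzero mass tokens and the lexicographic position of the leading mass. Whenever the leading mass is forwarded (Event Trigger Conditions~3) to a node whose stored mass is strictly follower, the merge triggered by Event Trigger Conditions~2 strictly decreases the token count; whenever two leading masses meet, they similarly merge. Because the digraph is strongly connected and the round-robin schedule cycles through all outgoing edges, within $(n-1)m^2$ additional time steps all surviving nonzero masses coincide with the leading mass. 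At that point Lemma~\ref{second_lemma} applies: Event Trigger Conditions~2 can never fire again, only broadcasts from Event Trigger Conditions~1 remain, and these propagate the common state variables across the strongly connected graph in at most $n-1$ subsequent steps, after which $S\_br_j=M\_tr_j=0$ for every $v_j$ and all transmissions cease. Combining the budgets $1+\mathcal{D}_{max}^+$ (injection), $(n-1)m^2$ (merging), and absorbing the final broadcast propagation into the $n^2$ slack yields the stated upper bound $1+\mathcal{D}_{max}^+ + n^2 + (n-1)m^2$.

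The main obstacle I anticipate is handling the overlap between injections and event-triggered updates during the first $\mathcal{D}_{max}^+ + 1$ iterations. Because every node is simultaneously adding fresh mass $(u_j^y[s_j],1)$ to $(y_j[k],z_j[k])$ at the very instant it transmits, the identity of the leading mass can change rapidly and Event Trigger Conditions~2 may fire repeatedly at many nodes. The delicate point is to show that these transient firings neither break the conservation identity used above (they do not, since the injected $(u_j^y[s_j],u_j^z[s_j])$ are added to $(y_j,z_j)$ \emph{before} transmission, so they simply enter the circulating mass pool) nor accelerate the vanishing of leading-mass candidates in a way that prevents the post-injection configuration from being a valid starting point for the merging argument of \cite{2021:Rikos_Finite_Trans_AUTO}. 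Once this bookkeeping is carried out carefully, the remainder of the proof is a direct application of Lemmas~\ref{before_second_lemma} and~\ref{second_lemma} followed by arithmetic on the three budget terms above.
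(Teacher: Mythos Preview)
Your proposal follows essentially the same route as the paper: isolate the first $\mathcal{D}_{max}^+ + 1$ iterations as an injection phase during which all substates enter the network, and then reduce the remaining execution to the analysis in \cite{2021:Rikos_Finite_Trans_AUTO} via Lemmas~\ref{before_second_lemma} and~\ref{second_lemma}. The paper's proof sketch does exactly this and then simply cites \cite{2021:Rikos_Finite_Trans_AUTO} for the merging bound $n^2+(n-1)m^2$, so your decomposition of the budget into the three terms matches.

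One mechanical point to fix: you write ``Whenever the leading mass is forwarded (Event Trigger Conditions~$3$) to a node whose stored mass is strictly follower\ldots''. This is backwards. The leading mass \emph{never} satisfies Event Trigger Conditions~$3$ (the paper states this explicitly), so it is never transmitted; it sits at its host node, whose state variables are set to it via Conditions~$2$ and then broadcast outward via Conditions~$1$. It is the \emph{follower} masses that are pushed around by Conditions~$3$ (round-robin) until they land on the node holding the leading mass and merge with it there, triggering a fresh firing of Conditions~$2$. Your token-count argument still goes through once you reverse the direction of flow, but as written the description of the merging mechanism is inverted.
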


\begin{proof} 
In this proof we will show that there exists $k_0 \in \mathbb{Z}_+$ for which the mass variables of {\em every} node $v_j$ (for which $z_j[k] > 0$) fulfill (\ref{great_z_prop1_det}) and (\ref{great_z_prop2_det}), for every $k \geq k_0$. 
This means that for $k \geq k_0$ we have only leading masses. 
Note here that, during the execution of Algorithm~\ref{algorithm_max}, the leading mass will not fulfill the Event Trigger Conditions~$3$ (in Execution Step~$3$ of Algorithm~\ref{algorithm_max_1a}).
This means that the corresponding node (say $v_j$) will not transmit its mass variables to its out-neighbors $v_l \in \mathcal{N}_j^+$ according to its predetermined priority. 
Furthermore, from Lemma~\ref{second_lemma} we will also show that there exists $k_1 > k_0$ ($k_1 \in \mathbb{Z}_+$), where for every $k \geq k_1$ the state variables of every $v_j \in \mathcal{V}$ fulfill (\ref{alpha_z_y}) and (\ref{alpha_q}) for $\alpha \in \mathbb{Z}_+$ (i.e., every node has reached quantized consensus) and thus transmissions cease. 

During the Initialization steps of Algorithm~\ref{algorithm_max}, each node $v_j$ will (i) decompose its initial state $y_j[0]$ into $\mathcal{D}_{max}^+ + 2$ substates $u_j^y[s_j] \in \mathbb{Z}$, for $s_j \in \{ 0, 1, 2, ..., \mathcal{D}_{max}^+ + 1\}$, (ii) set its initial state $y_j[0]$ to be equal to $u_j^y[0]$, (iii) increase the substate counter $s_j$, and (iv) broadcast its state variables to every out-neighbor. 
Then, during Iteration Step~$1$, each node will (i) receive and update its state variables, (ii) receive and update its mass variables, and (iii) call Algorithm~\ref{algorithm_max_1a} to check Event Trigger Conditions~$1$, Event Trigger Conditions~$2$, and Event Trigger Conditions~$3$. 
However, note here that regardless of the output of Algorithm~\ref{algorithm_max_1a}, each node $v_j$ will utilize the privacy preseving strategy for the first $\mathcal{D}_{max}^+ + 1$ time steps (i.e., for $k = 0, 1, ..., \mathcal{D}_{max}^+$).
This means that for the first $\mathcal{D}_{max}^+ + 1$ time steps of the Iteration procedure, each node $v_j$ will inject to its mass variables the set of substates $u_j^y[s_j]$, $u_j^z[s_j]$, for $s_j \in \{ 0, 1, 2, ..., \mathcal{D}_{max}^+ \}$. 
Then, it will transmit its mass variables to an out-neighbor according to the order $P_{lj}$. 
Thus, after $\mathcal{D}_{max}^+ + 1$ time steps, each node $v_j$ will have injected in the network every set of substates $u_j^y[s_j]$, $u_j^z[s_j]$. 

For the analysis of the execution of Algorithm~\ref{algorithm_max} for time steps $k \geq \mathcal{D}_{max}^+ + 1$, we can use steps similar to the analysis in the proof of Theorem~$1$ in \cite{2021:Rikos_Finite_Trans_AUTO}. 
\end{proof}

\subsection{Topological Conditions for Privacy Preservation}\label{sec:conditions}

We now establish necessary topological conditions that ensure privacy for every $v_j \in \mathcal{V}_p$ which follows Algorithm~\ref{algorithm_max}. 

\begin{prop}\label{prop:1}
Consider a fixed strongly connected digraph $\mathcal{G}_d = (\mathcal{V}, \mathcal{E})$ with $n=|\mathcal{V}|$ nodes. 
Assume that a subset of nodes $v_j \in \mathcal{V}_p$ follow Algorithm~\ref{algorithm_max} where they choose the set of subsets chosen as in \eqref{Offset_value_1bb}--\eqref{Offset_value_1e}. 
Curious nodes $v_c \in \mathcal{V}_c$ will not be able to identify the initial state $y_j [0]$ of $v_j \in \mathcal{V}_p$, as long as $v_j$ has at least one in- or out-neighbor $v_{\ell}\in \mathcal{V}_p$ connected to it that aims to preserve its privacy. 
\end{prop}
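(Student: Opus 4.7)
My plan is an indistinguishability argument: I will show that whenever $v_j \in \mathcal{V}_p$ has a neighbor $v_\ell \in \mathcal{V}_p$, for every $\kappa \in \mathbb{Z}$ there is an execution of Algorithm~\ref{algorithm_max} whose initial state at $v_j$ is $y_j[0] + \kappa$ but whose transcript of messages visible to the colluding set $\mathcal{V}_c$ coincides with the original transcript. Since infinitely many integer shifts are feasible, $\mathcal{V}_c$ cannot confine $y_j[0]$ to any finite range $[\alpha,\beta]$, which is exactly the failure condition for Definition~\ref{Definition_Quant_Privacy}.

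The first step is to pin down the curious view: it consists of broadcast packets $(y_i^s[k],z_i^s[k])$ received on any incoming edge to $\mathcal{V}_c$, together with every unicast mass packet $(y_i[k],z_i[k])$ whose destination is in $\mathcal{V}_c$; all traffic between honest nodes is unobservable. The algebraic handle is \eqref{Offset_value_1e}, which gives $y_j[0]=(\mathcal{D}_{max}^+ + 2)^{-1}\sum_{s_j=0}^{\mathcal{D}_{max}^+ + 1} u_j^y[s_j]$, so shifting any single substate by $(\mathcal{D}_{max}^+ + 2)\kappa$ shifts $y_j[0]$ by exactly $\kappa$. It therefore suffices in each case to identify a substate that is hidden from $\mathcal{V}_c$ and whose perturbation can be reabsorbed inside the honest sub-network.

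For the out-neighbor case, the round-robin rule in Iteration Step~5 of Algorithm~\ref{algorithm_max} guarantees that, over the $\mathcal{D}_{max}^+ + 1$ unicasts emitted by $v_j$, each of its $\mathcal{D}_j^+ \leq \mathcal{D}_{max}^+$ out-neighbors receives at least one. I would pick an index $s^\ast \geq 1$ such that the corresponding transmission is routed to the honest $v_\ell$; the packet carrying $u_j^y[s^\ast]$ (plus accumulated mass) never enters $\mathcal{V}_c$. I would then perturb $u_j^y[s^\ast]$ by $+(\mathcal{D}_{max}^+ + 2)\kappa$ and perturb one of $v_\ell$'s own hidden substates by $-(\mathcal{D}_{max}^+ + 2)\kappa$, so that $v_\ell$'s running mass at every instant $v_\ell$ subsequently interacts with $\mathcal{V}_c$ is identical to the original execution. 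Consequently every event-trigger condition in Algorithm~\ref{algorithm_max_1a} fires at the same time step network-wide, every visible packet is bit-for-bit unchanged, and the final quantized average $q$ is preserved because $y_j[0]+y_\ell[0]$ is preserved. The in-neighbor case is the mirror image: $v_\ell$ unicasts a hidden substate into $v_j$, which blends it with $v_j$'s own outgoing substates; the paired $+\kappa$/$-\kappa$ perturbation is applied analogously, using the edge between $v_j$ and $v_\ell$ as the private channel.

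The main obstacle I anticipate is the compensation step: showing that $v_\ell$ always possesses an unobservable substate to absorb the shift, and that the perturbation preserves the distinctness and nonequality constraints in \eqref{Offset_value_1bb}--\eqref{Offset_value_1bbb}. Existence of an unobservable substate at $v_\ell$ follows by applying the proposition's hypothesis to $v_\ell$, for which the honest edge between $v_j$ and $v_\ell$ itself witnesses a privacy-preserving neighbor. The distinctness constraints exclude only finitely many choices of $\kappa$ (the collisions with other substates or with $u_\ell^y[0]$), so infinitely many integer shifts remain admissible, and hence no finite range $[\alpha,\beta]$ in Definition~\ref{Definition_Quant_Privacy} can capture $y_j[0]$.
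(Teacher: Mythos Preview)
Your indistinguishability argument is a genuinely different route from the paper's. The paper proceeds by a brief, informal case analysis: it enumerates four neighborhood configurations (all neighbors curious; one neutral out-neighbor; one privacy-seeking out-neighbor; one privacy-seeking in-neighbor) and, in the positive cases, simply asserts that because the curious set cannot directly observe the link between $v_j$ and $v_\ell$, it cannot disentangle $v_j$'s substates from $v_\ell$'s and hence cannot pin $y_j[0]$ to a finite range. Your approach is structurally sharper: you exhibit, for every integer shift $\kappa$ (outside a finite excluded set), an alternate pair of substate vectors at $v_j$ and $v_\ell$ producing the same curious transcript but changing $y_j[0]$ by $\kappa$. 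That directly delivers the unbounded-range conclusion Definition~\ref{Definition_Quant_Privacy} demands, whereas the paper leaves that step implicit.

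There is, however, a gap in your compensation step that your more explicit framework exposes (and that the paper's informal argument also glosses over). In Algorithm~\ref{algorithm_max}, the Event-Trigger Conditions of Algorithm~\ref{algorithm_max_1a} are evaluated at Iteration Step~3 on the mass variables \emph{after} incoming mass is accumulated (Step~2) but \emph{before} the substate $u_\ell^y[s_\ell]$ is injected (Step~5). Hence, at the iteration where $v_\ell$ first receives your perturbed packet from $v_j$, the mass $y_\ell[k+1]$ seen by Condition~(ii) of Event-Trigger Conditions~2 is already shifted by $(\mathcal{D}_{max}^+ + 2)\kappa$, while your compensating shift on $v_\ell$'s own substate has not yet taken effect. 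If in the original run $z_\ell[k+1]=z_\ell^s[k+1]$, the $y$-shift can flip whether Condition~(ii) fires; when it does, $v_\ell$ copies the perturbed mass into its state variables and \emph{broadcasts them} at Step~6, and that broadcast reaches every (possibly curious) out-neighbor of $v_\ell$. Your assertion that ``every event-trigger condition in Algorithm~\ref{algorithm_max_1a} fires at the same time step network-wide'' therefore needs an additional argument---either that the $z$-comparison in Condition~(i) is always strict at the relevant iterations so that Condition~(ii) is never reached, or a more careful placement of the compensating shift (for instance, splitting it across $v_\ell$'s earlier substates so that $v_\ell$'s pre-injection mass is unchanged at every step). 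The paper's case analysis never engages this issue, so the gap is not a deficit relative to the paper; but it does stand between your outline and a fully rigorous indistinguishability proof.
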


\begin{proof}
In this proof we consider the following cases regarding the topological conditions for privacy preservation during the execution of Algorithm~\ref{algorithm_max}. 
Then, we summarize the results and derive the necessary and sufficient topological conditions for privacy preservation.  
\\ \noindent
\textbf{A.} Every in- and out-neighbor of node $v_j$ is curious (i.e., $v_i \in \mathcal{V}_c$, $\forall v_i \in \mathcal{N}_j^-$, and $v_l \in \mathcal{V}_c$, $\forall v_l \in \mathcal{N}_j^+$). 
In this case, the curious in- and out-neighbors communicate with each other and node $v_j \in \mathcal{V}_p$ will not be able to keep its privacy. 
Specifically, at Initialization curious nodes will know $u_j^y[0]$. 
Then, during the Iteration procedure, curious nodes will know the messages $v_j$ has received and the messages $v_j$ has transmitted. 
This means that they will be able to determine the values of $u_j^y[s_j] \in \mathbb{Z}$, for $s_j \in \{1, 2, ..., \mathcal{D}_{max}^+ \}$. 
Note that the average of every $u_j^y[s_j]$, for $s_j \in \{0, 1, 2, ..., \mathcal{D}_{max}^+ \}$, is equal to $v_j$'s initial state $y_j[0]$. 
This means that curious nodes will be able to determine the initial state $y_j[0]$. 
As a result, for the case where every in- and out-neighbor of node $v_j$ is curious, node $v_j$ does not preserve the privacy of its initial state. 
\\ \noindent
\textbf{B.} One out-neighbor of node $v_j$, say $v_{l'}$, is neither curious nor following the privacy preserving strategy (i.e., $v_{l'} \in \mathcal{V}_n$), and all other in- and out-neighbors of both nodes $v_j$, $v_{l'}$ are curious (i.e., $v_i \in \mathcal{V}_c$, $\forall v_i \in \mathcal{N}_j^-$, and $v_{l} \in \mathcal{V}_c$, $\forall v_l \in \mathcal{N}_j^+ \setminus \{ v_{l'} \}$). 
During the Initialization procedure, curious nodes will know $y_{l'}[0]$. 
Also, during the Iteration procedure, curious nodes will know the messages $v_j$ has received and the messages $v_j$ has transmitted. 
Furthermore, curious nodes can infer the input of node $v_l$ from its output.
Then, they will be able to extract the messages of node $v_j$ (as if a curious node was directly connected to node $v_j$). 
As a result, for the case where one out-neighbor of node $v_j$ is neither curious nor following the privacy preserving protocol and every other in- and out-neighbor of node $v_j$ is curious, $v_j$ does not preserve the privacy of its initial state. 
\\ \noindent
\textbf{C.} One out-neighbor of node $v_j$, say $v_{l'}$, is following the privacy preserving strategy (i.e., $v_{l'} \in \mathcal{V}_p$) and all other in- and out-neighbors of both nodes $v_j$, $v_{l'}$ are curious (i.e., $v_i \in \mathcal{V}_c$, $\forall v_i \in \mathcal{N}_j^-$, and $v_{l} \in \mathcal{V}_c$, $\forall v_l \in \mathcal{N}_j^+ \setminus \{ v_{l'} \}$). 
During the Iteration procedure, curious nodes will not be able to infer the substates transmitted from node $v_j$ to node $v_{l'}$. 
As a result, curious nodes will not be able to infer the initial state of node $v_j$ and the initial state of node $v_{l'}$. 
Thus, in this case node $v_j$ preserves the privacy of its initial state. 
\\ \noindent
\textbf{D.} The case where one in-neighbor of $v_j$, say $v_{i'}$, is following the privacy preserving strategy and every in- and out-neighbors of both nodes are curious can be analyzed as case \textbf{C}. 

From the four cases \textbf{A -- D} we considered, we have that a node $v_j \in \mathcal{V}_p$ is able to preserve its privacy if it has at least one in- or out-neighbor (say $v_{i'}$ or $v_{l'}$) who also wants to preserve its privacy and follows the proposed privacy preserving strategy. 
Furthermore, it is important to note that curious nodes will not be able to determine (i) the values of the messages transmitted from $v_{i'}$ to $v_j$, and (ii) the values of the messages transmitted from $v_j$ to $v_{l'}$. 
This means that curious nodes will not be able to determine a finite range  $[\alpha, \beta]$ (where $\alpha < \beta$ and $\alpha, \beta \in \mathbb{R}$) in which the initial state $y_j[0]$ lies in (as already mentioned in Definition~\ref{Definition_Quant_Privacy}).
\end{proof}

\begin{remark}
Note here that decomposing the initial state of every $v_j \in \mathcal{V}_p$ into $\mathcal{D}_{max}^+ + 2$ substates is essential for privacy preservation. 
The first set of substates is used as $v_j$'s initial state. 
Then, $v_j$ transmits the rest $\mathcal{D}_{max}^+ + 1$ substates towards its out-neighbors. 
This means that $v_j$ transmits at least one set of privacy variables to every out-neighbor. 
As a result, if $v_{l'} \in \mathcal{V}_p$ (where $v_{l'} \in \mathbb{N}_j^+$), then $v_{l'}$ receives (and sums with its own mass variables) at least one set of $v_j$'s privacy variables before it transmits every set of its own privacy variables towards its out-neighbors. 
\end{remark}

In the above analysis, we are interested in whether the curious nodes can exactly infer the value of another node. 
The case where a set of curious nodes attempts to ``estimate'' the initial values is outside the scope of this paper and will be considered as a future direction.

\section{SIMULATION RESULTS}\label{results}

In this section, we illustrate the behavior of Algorithm~\ref{algorithm_max} and the advantages of its operation. 
We analyze the scenario of $1000$ randomly generated digraphs of $20$ nodes each where, the initial quantized state of each node remained the same (for each one of the $1000$ randomly generated digraphs); thus, the average of the nodes' initial states remained equal to $q = 13.4$. 

\begin{figure}[t]
\begin{center}
\includegraphics[width=.97\columnwidth]{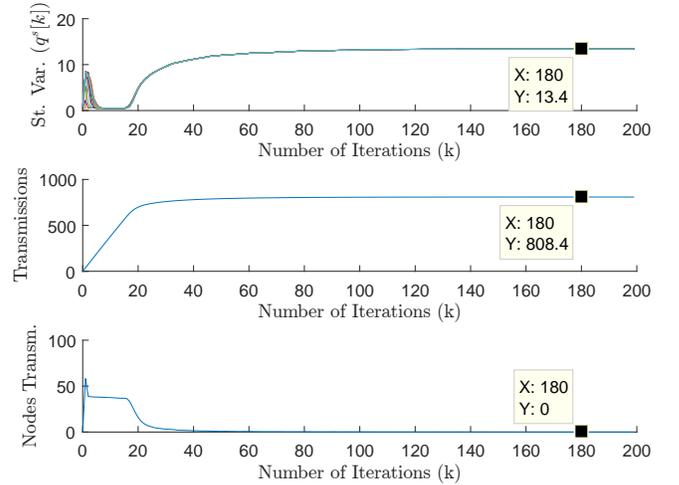}
\caption{Execution of Algorithm~\ref{algorithm_max} averaged over $1000$ random digraphs of $20$ nodes.  
\textit{Top figure:} Average values of node state variables plotted against the number of iterations (averaged over $1000$ random digraphs of $20$ nodes). \textit{Middle Figure:} Average total number of transmissions plotted against the number of iterations (averaged over $1000$ random digraphs of $20$ nodes). \textit{Bottom Figure:} Average number of nodes performing transmissions plotted against the number of iterations (averaged over $1000$ random digraphs of $20$ nodes)\vspace{-0.2cm}.}
\label{20nodes_13_4_average1000_transm}
\end{center}
\end{figure}

In Fig.~\ref{20nodes_13_4_average1000_transm}, we illustrate Algorithm~\ref{algorithm_max} over a random digraph of $20$ nodes where the average of the initial states is $q = 13.4$. 
We show the average number of time steps needed for quantized average consensus to be reached, the average number of transmissions accumulated until each time step, and the average number of nodes performing transmissions at each time step. 
We observe that Algorithm~\ref{algorithm_max} converges after $180$ time steps, with the average total number of transmissions performed until $180$ time step being equal to $808.4$. 
Additionally, we observe that the average number of nodes performing time steps at each iteration becomes almost equal to zero after $50$ time steps, and becomes eventually equal to zero after $180$ time steps.

\begin{figure}[t]
\begin{center}
\includegraphics[width=.95\columnwidth]{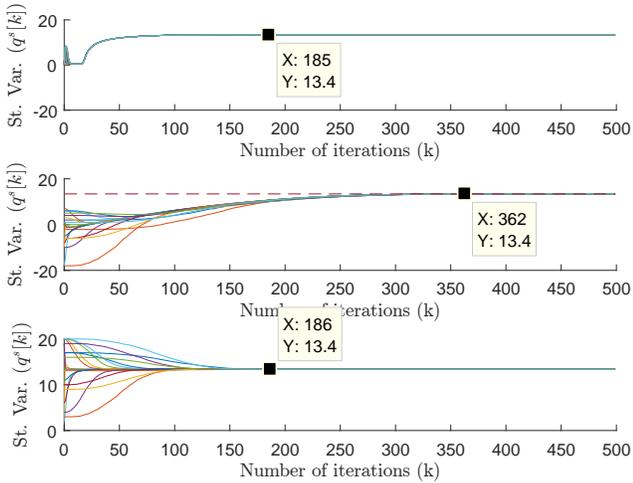}
\caption{Comparison between Algorithm~\ref{algorithm_max}, the event-based offset algorithm in \cite{2020:Rikos_Privacy_CDC}, and the initial zero-sum offset algorithm in \cite{2021:Rikos_Privacy_TCNS}.  
\textit{Top figure:} Average values of node state variables plotted against the number of iterations for Algorithm~\ref{algorithm_max} (averaged over $1000$ random digraphs of $20$ nodes). \textit{Middle Figure:} Average values of node state variables plotted against the number of iterations for the event-based offset algorithm in \cite{2020:Rikos_Privacy_CDC} (averaged over $1000$ random digraphs of $20$ nodes). \textit{Bottom Figure:} Average values of node state variables plotted against the number of iterations for the initial zero-sum offset algorithm in \cite{2021:Rikos_Privacy_TCNS} (averaged over $1000$ random digraphs of $20$ nodes)\vspace{-0.2cm}.}
\label{Comparisons_20nodes_13_4}
\end{center}
\end{figure}

In Fig.~\ref{Comparisons_20nodes_13_4} we plot the node state variables averaged over $1000$ randomly generated digraphs of $20$ nodes each, where the average of the initial states is $q = 13.4$. 
We compare Algorithm~\ref{algorithm_max} against (i) the event-based offset algorithm in \cite{2020:Rikos_Privacy_CDC} (see middle of Fig.~\ref{Comparisons_20nodes_13_4}), and (ii) the initial zero-sum offset algorithm in \cite{2021:Rikos_Privacy_TCNS} (see bottom of Fig.~\ref{Comparisons_20nodes_13_4}). 
In \cite{2020:Rikos_Privacy_CDC} (case (i)), the initial offset for every node $v_j$ is $u_j \in [-100, -50]$ and the offset adding steps are $L_j \in [20, 40]$ during the execution. 
In \cite{2021:Rikos_Privacy_TCNS} (case (ii)), the initial offset for every node $v_j$ is $u_j \in [-100, 100]$ and the offsets are $u^{(l)}_j \in [-20, 20]$, for every $v_l \in \mathcal{N}_j^+$. 
We observe that Algorithm~\ref{algorithm_max} converges after $185$ time steps and again significantly outperforms the event-based offset algorithm in \cite{2020:Rikos_Privacy_CDC} which converges after $362$ time steps. 
Furthermore, it is interesting to note that Algorithm~\ref{algorithm_max} requires almost the same time steps as  the initial zero-sum offset algorithm \cite{2021:Rikos_Privacy_TCNS} which converges after $186$ time steps. 
However, note again that in \cite{2021:Rikos_Privacy_TCNS} each node performs multiple simultaneous transmissions of different quantized values during the Initialization operation, as mentioned in Section~\ref{sec:compar_subsec}. 
Finally, note that neither \cite{2020:Rikos_Privacy_CDC} nor \cite{2021:Rikos_Privacy_TCNS} exhibit finite transmission capabilities. 
This makes Algorithm~\ref{algorithm_max} the first algorithm in the current literature in which each node (i) achieves the exact quantized average of the initial states, (ii) terminates its transmission operation, and (iii) preserves the privacy of its initial state.

\section{CONCLUSIONS}\label{future}

In this paper, we presented a privacy-preserving event-triggered quantized average consensus algorithm. 
The algorithm allows each node in the network to calculate the exact quantized average of the initial states in the form of a quantized fraction without revealing its initial quantized state to other nodes. 
The privacy-preserving strategy takes full advantage of the algorithm's event-based nature and finite transmission capabilities and allows each node to cease transmissions once convergence has been achieved without knowledge of any global parameter (i.e., network diameter). 
We also analyzed the algorithm's finite time convergence and presented an upper bound on the required number of time steps. 
Then, we presented necessary and sufficient topological conditions under which the proposed algorithm allows nodes to preserve their privacy. 
Finally, we demonstrated the performance of our proposed algorithm and compared it against other algorithms in the existing literature. 

In the future, 
we plan to extend the algorithm's operation to guarantee privacy preservation for the case where curious nodes attempt to ``estimate'' the initial values of other nodes. 



\balance

\vspace{-0.4cm}

\bibliographystyle{IEEEtran}
\bibliography{bibliografia_consensus}

\balance

\end{document}